\pgfplotsset{compat=newest}
\let\NAT@parse\undefined \makeatother
\newlength{\figureheight}
\newlength{\figurewidth}
\DeclareMathAlphabet{\mymathbb}{U}{bbold}{m}{n}
\def \Reals {\mathbb{R}}
\def \Trans {\top}
\DeclareMathOperator{\Unif}{Unif}
\DeclareMathOperator{\Gauss}{\mathcal{N}}
\newtheorem{theorem}{Theorem}
\newtheorem{lemma}{Lemma}
\newtheorem{corollary}{Corollary}
\newtheorem{assumption}{Assumption}
\newtheorem{example}{Example}
\newcommand\copyrighttext{%
	\footnotesize \textbf{Published in: 2019 European Control Conference (ECC), DOI: \href{https://doi.org/10.23919/ECC.2019.8795823}{10.23919/ECC.2019.8795823}.}\\
	\textcopyright 2019 IEEE. Personal use of this material is permitted. Permission from IEEE must be obtained for all other uses, in any current or future media, including reprinting/republishing this material for advertising or promotional purposes, creating new collective works, for resale or redistribution to servers or lists, or reuse of any copyrighted component of this work in other works.}
\newcommand\copyrightnotice{%
	\begin{tikzpicture}[remember picture,overlay]
	\node[anchor=south,yshift=3pt] at (current page.south) {\fbox{\parbox{\dimexpr\textwidth-\fboxsep-\fboxrule\relax}{\copyrighttext}}};
	\end{tikzpicture}%
}
\begin{document}

\title{\LARGE \bf
	Using Uncertainty Data in Chance-Constrained Trajectory Planning
}

\author{Vasileios Lefkopoulos and Maryam Kamgarpour
\thanks{The work of M.\ Kamgarpour is gratefully supported by Swiss National Science Foundation, under the grant SNSF 200021\_172782.}%
\thanks{Both authors are with the Automatic Control Laboratory, ETH Zurich, Z{\"u}rich, CH-8092, Switzerland \{\href{mailto:vlefkopo@student.ethz.ch}{{\tt vlefkopo@student.ethz.ch}}, \href{mailto:mkamgar@control.ee.ethz.ch}{\tt mkamgar@control.ee.ethz.ch}\}.}%
}

\maketitle
\copyrightnotice
\thispagestyle{empty}
\pagestyle{empty}

\begin{abstract}
	We consider the problem of trajectory planning in an environment comprised of a set of obstacles with uncertain locations. While previous approaches model the uncertainties with a prescribed Gaussian distribution, we consider the realistic case in which the distribution's moments are unknown and are learned online. We derive tight concentration bounds on the error of the estimated moments. These bounds are then used to derive a tractable and tight mixed-integer convex reformulation of the trajectory planning problem, assuming linear dynamics and polyhedral constraints. The solution of the resulting optimization program is a feasible solution for the original problem with high confidence. We illustrate the approach with a case study from autonomous driving.
\end{abstract}

\section{INTRODUCTION} \label{sec:Introduction}
Trajectory planning in uncertain environments arises in several autonomous system applications. One method to incorporate the environment's uncertainty into an optimization framework is through chance-constrained programming. With this approach, an acceptable risk level is prescribed and trajectories are constrained to reside in a corresponding probabilistically \enquote{safe} set. The advantage of this approach, in contrast to robust optimization, is that allowing a small level of constraint violation leads to less conservative trajectories and, moreover, can also handle uncertainties with unknown or unbounded support sets.

Mixed-integer programming is often used for dealing with the non-convex nature of obstacle avoidance, as was done for example in~\cite{Schouwenaars2001,Richards2002}. In conjunction with this, Boole's inequality is commonly employed in order to decouple joint chance constraints into single ones, e.g.\ in~\cite{Blackmore2011,Jha2018}. These two methods have been successfully used in recent literature, thus also making them a natural choice for our work.

For tractability, the uncertainties in trajectory planning problems are often assumed to follow Gaussian distributions. This allows reformulations of the chance constraints based on the distributions' moments. In~\cite{Vitus2011} the authors considered planning in convex regions under Gaussian plant uncertainties. Chance-constrained trajectory generation in non-convex regions was tackled in~\cite{Blackmore2011}. In~\cite{Jha2018}, the uncertainties originated from the environment as uncertain polyhedrons. Both system and environment uncertainties were considered in~\cite{Vitus2012} and handled via approximations. For arbitrary distributions, the Cantelli inequality was used in~\cite{Paulson2017} to tackle a chance-constrained model predictive control problem. In all of the above, the standing assumption was that the moments of the uncertainty were \emph{known a priori}. In several realistic scenarios, the autonomous system must maneuver in a partially known environment and can only learn about environment uncertainties online based on real-time sensor measurements.

One method to relax the assumption of known moments is to use samples of the distribution to estimate these moments and to derive probabilistic bounds on the error of these estimates through moment concentration inequalities. Distribution-free concentration inequalities derive such bounds for the case in which no assumption is made about the uncertainty's distribution. The reformulation of chance constraints using estimates of moments has been addressed for distribution-free cases in~\cite{Calafiore2006,Delage2010}. If no assumptions are made on the distribution, the concentration inequalities derived are naturally conservative. As a result, this approach often results in infeasibility for trajectory planning problems. 

Another recent approach to solving chance-constrained trajectory generation problems has been through utilizing collected data directly. In~\cite{Sessa2018} the scenario approach was used to solve a chance-constrained problem arising from deriving robust policies for trajectory planning. The scenario approach to this problem requires a large number of samples to provide high confidence feasibility guarantees. On the one hand, such high number of samples may not be available. On the other hand, the resulting optimization problem is too large for real-time implementation.

We attempt to reach a compromise between the purely data-driven approach of~\cite{Sessa2018} and the purely known distribution assumptions of~\cite{Jha2018} for trajectory generation in stochastic environments. In particular, we take on the fairly accepted model of a Gaussian distribution but consider the realistic case in which the moments of the distribution are \emph{unknown} and are estimated through samples of the uncertainties. Such samples could be obtained through robots sensors online. While the problem of chance-constrained programming under uncertain moments has been addressed through the seminal work of~\cite{Calafiore2006}, the case of tight concentration inequalities given distribution model and using this in a mixed-integer setting to provide probabilistic guarantees for stochastic trajectory generation has not been addressed. To this end, we derive novel and tight concentration bounds on the estimation error of these moments, using the knowledge of the Gaussian distributions. Based on these bounds, we propose a reformulation of the chance-constrained problem using only the estimates of the Gaussian moments and we prove its solution's feasibility (in a probabilistic sense) for the original problem. The provided proof holds regardless of the number of samples available. We demonstrate the performance of this method through a case study in autonomous driving.

\subsubsection*{Notation}
We denote a conjunction (logical AND operator) by $\bigwedge$ and a disjunction (logical OR operator) by $\bigvee$. By $\Gauss(\mu,\Sigma)$ we denote the Gaussian distribution with mean vector $\mu$ and covariance matrix $\Sigma$. By $\Psi^{-1}(\cdot)$ we denote the inverse cumulative distribution function of $\Gauss(0,1)$. We denote a positive definite matrix by $A \succ 0$. Given a random variable $d$, we call its samples $d_i$ as i.i.d.\ if they are independent and identically distributed. By $\Unif(a,b)$ we denote the continuous uniform distribution over $[a,b]$.

\section{PROBLEM STATEMENT} \label{sec:ProblemStatement}

\subsection{Chance-constrained trajectory planning} 
We consider the discrete-time system:
\begin{equation} \label{eq:x_Dynamics}
	x_{t+1} = A_t x_t + B_t u_t \,,
\end{equation}
where $x_t \in \Reals^{n_x}$ is the state and $u_t \in \Reals^{n_u}$ is the input at time $t$. Given an initial state $x_0$ and a horizon length $N$ our goal is to choose the input sequence $\vb{u} \coloneqq (u_0,\dots,u_{N-1})$, with $u_t \in \mathcal{U}$, leading to the state trajectory $\vb{x} \coloneqq (x_1,\dots,x_N)$, such that a given cost $J(x_0,\vb{u})$ is minimized. For tractability, we assume that $\mathcal{U}$ is a compact polyhedron and that the cost $J(x_0,\vb{u})$ is convex with respect to its arguments.

We also require that each state $x_t$ belongs to a \enquote{safe} set $\mathcal{X}_t$, $t=1,\dots,N$. The set $\mathcal{X}_t$ consists of the area outside of polyhedral obstacles with \emph{uncertain faces}. Let $N_o$ be the number of obstacles, with each obstacle indexed by $j$, and let $F_j$ be the number of faces of the $j$-th obstacle, with each face indexed by $i$. Then, the set $\mathcal{X}_t$ can be written as:
\begin{equation} \label{eq:Xt_SafeSet}
	\mathcal{X}_t \coloneqq \left\{ x_t \in \Reals^{n_x} : \bigwedge_{j=1}^{N_o} \bigvee_{i=1}^{F_j} {a^t_{ij}}^\Trans x_t + b^t_{ij} > 0 \right\} \,,
\end{equation}
where $a^t_{ij} \in \Reals^{n_x}$ and $b^t_{ij} \in \Reals$ are the uncertain coefficients of the $i$-th face of the $j$-th obstacle at time $t$. The coefficients $a^t_{ij}$ and $b^t_{ij}$ are modelled as random variables, alternatively concatenated through $d \coloneqq [ {a^t_{ij}}^\Trans \,\, b^t_{ij} ]^\Trans \in \Reals^{n_x+1}$. Due to the stochasticity in $d^t_{ij}$, we enforce the safety constraints through a \emph{joint} chance constraint:
\begin{equation} \label{eq:xt_ChanceConstraint}
	\Pr(\bigwedge_{t=1}^{N} x_t \in \mathcal{X}_t) \geq 1-\epsilon \,,
\end{equation}
where $\epsilon \in (0,0.5)$ is a prescribed safety margin. By combining \eqref{eq:Xt_SafeSet} and \eqref{eq:xt_ChanceConstraint} we obtain the set describing the constraint on the trajectory $\vb{x} \in \Reals^{Nn_x}$:
\begin{equation} \label{x_JointChanceConstraint}
	\bm{\mathcal{X}} \coloneqq \left\{ \vb{x} \in \Reals^{N n_x} : \bigwedge_{t=1}^{N} \bigwedge_{j=1}^{N_o} \bigvee_{i=1}^{F_j} {a^t_{ij}}^\Trans x_t + b^t_{ij} > 0 \right\} \,.
\end{equation}

The optimization problem can then be formulated as:
\vspace{-1.3\baselineskip}
\begin{mini!}<b>
	{\vb{u}}{J(x_0,\vb{u}) \label{eq:CCFHOC_cost}}
	{\label{eq:CCFHOC}}{}
	\addConstraint{\vb{x},\vb{u} \text{ satisfy \eqref{eq:x_Dynamics} with initial state } x_0}{\label{eq:CCFHOC_dynamics}}
	\addConstraint{\vb{u} \in \bm{\mathcal{U}}} {\label{eq:CCFHOC_inputs}}
	\addConstraint{\Pr(\vb{x} \in \bm{\mathcal{X}}) \geq 1-\epsilon}{\label{eq:CCFHOC_constr}}
\end{mini!}
\vspace{-\baselineskip}\par\noindent
where $\bm{\mathcal{U}} \coloneqq \mathcal{U} \cross \dots \cross \mathcal{U} \subseteq \Reals^{N n_u}$.

\subsection{Reformulation as second-order cone constraints} \label{sec:Reformulation}
Problem \eqref{eq:CCFHOC} is not in a form that can be directly handled by off-the-shelf optimization solvers because of the chance constraint \eqref{eq:CCFHOC_constr}. As such, we reformulate this constraint into a tractable form based on the so-called Big-M reformulation~\cite{Schouwenaars2001} and Boole's inequality~\cite{Casella2002}. The latter allows us to split \eqref{eq:CCFHOC_inputs} into \emph{single} chance constraints which are easier to handle. Another alternative for dealing with such constraints could have been e.g.\ the scenario approach as in~\cite{Sessa2018}.

\subsubsection{Big-M Method}
It is known that satisfaction of a disjunction as in \eqref{eq:Xt_SafeSet} can be equivalently written as~\cite{Schouwenaars2001}:
\begin{equation} \label{eq:CCFHOC_Constr_BigM}
	\bigvee_{i=1}^{F_j} {a^t_{ij}}^\Trans x_t + b^t_{ij} > 0 \Leftrightarrow \bigwedge_{i=1}^{F_j} {a^t_{ij}}^\Trans x_t + b^t_{ij} + Mz^t_{ij} > 0 \,,
\end{equation}
where $z^t_{ij} \in \{0,1\}$ are binary variables, for which it must hold that at least one of the $i$-indexed $z^t_{ij}$ is not equal to 1, and $M$ is a positive number that is larger than any possible constraint value in the problem being considered (Big-M). Introducing these binary variables and using \eqref{eq:CCFHOC_Constr_BigM}, constraint \eqref{eq:CCFHOC_constr} can be equivalently rewritten as:
\begin{equation} \label{JointChanceConstraint_BigM}
	\Pr(\bigwedge_{t=1}^{N} \bigwedge_{j=1}^{N_o} \bigwedge_{i=1}^{F_j} {a^t_{ij}}^\Trans x_t + b^t_{ij} + Mz^t_{ij} > 0) \geq 1-\epsilon \,,
\end{equation}
with the additional constraint $\sum_{i=1}^{F_j} z^t_{ij} < F_j$ for all $t$ and $j$, making Problem \eqref{eq:CCFHOC} a mixed-integer optimization problem.

\subsubsection{Boole's Inequality}
Boole's inequality states that for a finite number of events $A_i$ we have $ \Pr(\bigvee_i A_i) \leq \sum_i \Pr(A_i) $~\cite{Casella2002}. Using this the constraint \eqref{JointChanceConstraint_BigM} can be conservatively satisfied through~\cite{Jha2018}:
\begin{equation} \label{CCFHOC_SingleCCs}
	\bigwedge_{t=1}^{N} \bigwedge_{j=1}^{N_o} \bigwedge_{i=1}^{F_j} \Pr({a^t_{ij}}^\Trans x_t + b^t_{ij} + Mz^t_{ij} > 0) \geq 1-\epsilon^t_{ij} \,,
\end{equation}
where $0 \leq \epsilon^t_{ij} \leq 1$ are risk variables that must satisfy $ \sum_{t=1}^{N} \sum_{j=1}^{N_o} \sum_{i=1}^{F_j} \epsilon^t_{ij} \leq \epsilon $. Here, for simplicity, we use the uniform risk allocation $\epsilon^t_{ij} = \flatfrac{\epsilon}{(N \sum_{j=1}^{N_o} F_j)}$. Other existing approaches based on heuristics \cite{Jha2018} or optimization problems \cite{Vitus2011} can easily be incorporated.

To reformulate \eqref{CCFHOC_SingleCCs} into deterministic constraints on the moments of $d^t_{ij}$ we make the following assumption.
\begin{assumption} \label{ass:gaussian_distribution}
	Each $d^t_{ij}$ is a Gaussian random variable with mean $\mu^t_{ij}$ and covariance $\Sigma^t_{ij} \succ 0$, i.e.\ $d^t_{ij} \sim \mathcal{N}(\mu^t_{ij},\Sigma^t_{ij})$.
\end{assumption}
Consistent with several existing trajectory planning work, we assume normally distributed uncertainties to allow the analytic reformulation of the chance constraints\footnote{See~\cite{Calafiore2006} for more general distributions with analytic reformulations.} while capturing general uncertainties with a unimodal and unbounded support set. Under Assumption~\ref{ass:gaussian_distribution}, if $\epsilon^t_{ij} < \flatfrac{1}{2}$, each chance constraint of \eqref{CCFHOC_SingleCCs} is equivalent to the following second-order cone constraint~\cite[Theorem~2.1]{Calafiore2006} for $\tilde{x} \coloneqq [x^\Trans \,\, 1]^\Trans \in \Reals^{n_x+1}$:
\begin{equation} \label{eq:CCFHOC_SingleConstr_norm}
	\Psi^{-1}(1-\epsilon^t_{ij}) \norm{(\Sigma^t_{ij})^{\flatfrac{1}{2}} \tilde{x}_t}_2 \leq {\mu^t_{ij}}^\Trans \tilde{x}_t + Mz^t_{ij} \,.
\end{equation}

Putting the above developments together we can conclude that we can conservatively satisfy the joint chance constraint \eqref{eq:CCFHOC_constr} of our original problem through $\eqref{eq:CCFHOC_constr} \Leftarrow \eqref{CCFHOC_SingleCCs} \Leftrightarrow \eqref{eq:CCFHOC_SingleConstr_norm}$. Hence, Problem~\ref{eq:CCFHOC} is conservatively reformulated as:
\vspace{-0.25\baselineskip}
\begin{mini!}<b>
	{\vb{u},z^t_{ij}}{J(x_0,\vb{u}) \label{eq:CCFHOC_2_cost}}
	{\label{eq:CCFHOC_2}}{}
	\addConstraint{\vb{x},\vb{u} \text{ satisfy \eqref{eq:x_Dynamics} with initial state } x_0}{\label{eq:CCFHOC_2_dynamics}}
	\addConstraint{\vb{u} \in \bm{\mathcal{U}}} {\label{eq:CCFHOC_2_inputs}}
	\addConstraint{\Psi^{-1}(1-\epsilon^t_{ij}) \norm*{(\Sigma^t_{ij})^{\flatfrac{1}{2}} \tilde{x}_t}_2 \leq}{\notag}
	\addConstraint{\hspace{0.37\linewidth} {\mu^t_{ij}}^\Trans \tilde{x}_t + Mz^t_{ij}}{\label{eq:CCFHOC_2_constr}}
	\addConstraint{\sum_{i=1}^{F_j} z^t_{ij} < F_j ,\, z^t_{ij} \in \{0,1\}}{\label{eq:CCFHOC_2_binvar_constr}}
\end{mini!}
\vspace{-0.6\baselineskip}\par\noindent
where constraints \eqref{eq:CCFHOC_2_constr}--\eqref{eq:CCFHOC_2_binvar_constr} must hold for all values of indices $t,j$ and $i$. Problem \eqref{eq:CCFHOC_2} is now in the standard form of a Mixed-Integer Second-Order Cone Program (MISOCP), solvable by off-the-shelf optimization solvers.

\section{MOMENTS ROBUST APPROACH} \label{sec:MRA}
The moments $\mu^t_{ij}$ and $\Sigma^t_{ij}$ capture our knowledge about the distribution. In practice one might have sample data of $d^t_{ij}$, from their distributions, and would estimate the moments from these samples. To ensure satisfaction of the chance constraints in \eqref{eq:CCFHOC_2} the uncertainty in the moments needs to be incorporated. The goal of this section is to derive tight bounds on these uncertainties and to use these bounds to derive a conservative reformulation of the chance constraints accordingly. In the following, for the sake of brevity, the indices $i$, $j$ and $t$ are omitted from $d^t_{ij}$ and its moments. Furthermore, the dimension of $d^t_{ij}$ is denoted by $n$.

\subsection{Moment concentration inequalities}
\begin{assumption} \label{ass:samples}
	We have extracted $N_s$ i.i.d.\ samples $d_1,\dots,d_{N_s}$ of $d \sim \mathcal{N}(\mu,\Sigma)$.
\end{assumption}
We form the sample mean and covariance estimates $\hat{\mu} = \flatfrac{1}{N_s} \sum_{i=1}^{N_s} d_i$ and $\hat{\Sigma} = \flatfrac{1}{(N_s-1)} \sum_{i=1}^{N_s} (d_i-\hat{\mu})(d_i-\hat{\mu})^\Trans$, which we assume to be positive definite, i.e. $\hat{\Sigma} \succ 0$\footnote{This assumption would not be reasonable for a very small number of samples $N_s < n$, since the sample covariance matrix is a sum of $N_s$ rank-1 matrices. In practice, for our trajectory generation, it is reasonable to assume that $N_s$ is sufficiently larger than $n$.}.

\begin{lemma} \label{lem:mean_bound}
	With probability $1-\beta$:
	\begin{equation} \label{eq:mean_bound}
		\norm{\mu-\hat{\mu}}_2 \leq r_1 \coloneqq \sqrt{\frac{T_{n,N_s-1}^2(1-\beta)}{N_s \lambda_{\min}(\hat{\Sigma}^{-1})}} \,,
	\end{equation}
	where $T^2_{a,b}(p)$ denotes the $p$-th quantile of the Hotelling's T-squared distribution with parameters $a$ and $b$, and $\beta \in (0,1)$.
\end{lemma}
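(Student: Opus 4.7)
The plan is to invoke the classical multivariate statistics result that, under Assumption~\ref{ass:samples}, the Hotelling statistic
\begin{equation*}
T^2 \coloneqq N_s\, (\hat{\mu}-\mu)^\Trans \hat{\Sigma}^{-1} (\hat{\mu}-\mu)
\end{equation*}
follows a Hotelling's T-squared distribution with parameters $n$ and $N_s-1$. This holds because, for i.i.d.\ Gaussian samples, $\sqrt{N_s}(\hat{\mu}-\mu) \sim \Gauss(0,\Sigma)$ independently of $(N_s-1)\hat{\Sigma}$, which follows a Wishart distribution; the ratio that defines the $T^2$ statistic is then exactly the quadratic form above. I would state this as a known fact and cite a standard multivariate statistics reference.

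Once the distribution of $T^2$ is fixed, the definition of the quantile immediately gives that, with probability at least $1-\beta$,
\begin{equation*}
N_s\, (\hat{\mu}-\mu)^\Trans \hat{\Sigma}^{-1} (\hat{\mu}-\mu) \;\leq\; T^2_{n,N_s-1}(1-\beta) .
\end{equation*}
The next step is to pass from this ellipsoidal bound to a bound on $\norm{\mu-\hat{\mu}}_2$. Since $\hat{\Sigma} \succ 0$ by assumption, so is $\hat{\Sigma}^{-1}$, and the Rayleigh quotient inequality yields
\begin{equation*}
(\hat{\mu}-\mu)^\Trans \hat{\Sigma}^{-1} (\hat{\mu}-\mu) \;\geq\; \lambda_{\min}(\hat{\Sigma}^{-1})\, \norm{\hat{\mu}-\mu}_2^2 .
\end{equation*}
Dividing by $N_s \lambda_{\min}(\hat{\Sigma}^{-1}) > 0$ and taking square roots then produces exactly the expression for $r_1$ in \eqref{eq:mean_bound}.

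The main obstacle, and the only nontrivial ingredient, is identifying the quadratic form $N_s(\hat{\mu}-\mu)^\Trans \hat{\Sigma}^{-1}(\hat{\mu}-\mu)$ as Hotelling-distributed; the independence of $\hat{\mu}$ and $\hat{\Sigma}$ that underlies this relies crucially on the Gaussianity in Assumption~\ref{ass:gaussian_distribution}. The remaining steps are a direct application of the quantile definition and a Rayleigh quotient argument, both of which are routine.
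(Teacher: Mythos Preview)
Your proposal is correct and follows essentially the same route as the paper: identify $N_s(\hat{\mu}-\mu)^\Trans \hat{\Sigma}^{-1}(\hat{\mu}-\mu)$ as Hotelling $T^2_{n,N_s-1}$-distributed, take the $(1-\beta)$ quantile, and then pass from the ellipsoidal bound to a Euclidean-norm bound via the eigenvalue (Rayleigh quotient) inequality. If anything, your write-up is slightly more explicit than the paper's about why the Hotelling distribution applies.
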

\begin{proof}
	The sample covariance matrix of the sample mean is $\hat{\Sigma}_{\hat{\mu}} = \flatfrac{\hat{\Sigma}}{N_s}$. Furthermore, because of the underlying Gaussian assumption, the sample mean concentration:
	\begin{equation*}
		(\mu-\hat{\mu})^\Trans {\hat{\Sigma}_{\hat{\mu}}}^{-1} (\mu-\hat{\mu}) = N_s(\mu-\hat{\mu})^\Trans {\hat{\Sigma}}^{-1} (\mu-\hat{\mu}) \,,
	\end{equation*}
	follows the Hotelling's T-squared distribution~\cite{Hotelling1931} with parameters $n$ and $N_s-1$, denoted by $T^2_{n,N_s-1}$, i.e.:
	\begin{equation*}
		N_s(\mu-\hat{\mu})^\Trans {\hat{\Sigma}}^{-1} (\mu-\hat{\mu}) \sim T^2_{n,N_s-1} \,.
	\end{equation*}
	Hence, we can construct the $1-\beta$ confidence interval:
	\begin{equation*}
		N_s(\mu-\hat{\mu})^\Trans {\hat{\Sigma}}^{-1} (\mu-\hat{\mu}) \leq T^2_{n,N_s-1}(1-\beta) \,,
	\end{equation*}
	and bound the concentration of the sample mean as follows:
	\begin{IEEEeqnarray*}{ L L }
		\lambda_{\min}(\hat{\Sigma}^{-1}) \norm{\mu-\hat{\mu}}^2_2 & \leq (\mu-\hat{\mu})^\Trans {\hat{\Sigma}}^{-1} (\mu-\hat{\mu}) \\
		& \leq \frac{T^2_{n,N_s-1}(1-\beta)}{N_s} \,,
	\end{IEEEeqnarray*}
	with probability $1-\beta$, from which the statement of the lemma readily follows.
\end{proof}

\begin{lemma} \label{lem:var_bound}
	Define the constants:
	\begin{equation} \label{eq:var_diag_bound}
	\begin{split}
		r_{2,i} \coloneqq \hat{\Sigma}_{ii} \max\Big\{ 
		& \abs\big{ 1-\flatfrac{(N_s-1)}{\chi^2_{N_s-1,1-\flatfrac{\beta}{(2n)}}} } , \\
		&\abs\big{ 1-\flatfrac{(N_s-1)}{\chi^2_{N_s-1,\flatfrac{\beta}{(2n)}}} } \Big\} \,,
	\end{split}
	\end{equation}
	for all $i=1,\dots,n$, where $\chi^2_{k,p}$ is the $p$-th quantile of the $\chi^2$ distribution with $k$ degrees of freedom. Then, the inequality:
	\begin{equation} \label{eq:var_bound}
	\begin{split}
		& \norm{\Sigma-\hat{\Sigma}}_{\mathrm{F}} \leq r_2 \coloneqq \Bigg( \sum_{i=1}^{n} r^2_{2,i} + \\
		& \hspace{1.5pt} \sum_{i=1}^{n} \sum_{\substack{j=1 \\ j\neq i}}^{n} \Big( \sqrt{(\hat{\Sigma}_{ii}+r_{2,i})(\hat{\Sigma}_{jj}+r_{2,j})} + \abs*{\hat{\Sigma}_{ij}} \Big)^2 \Bigg)^{\flatfrac{1}{2}} ,
	\end{split}
	\end{equation}
	holds with probability $1-\beta$, where $\beta \in (0,1)$.
\end{lemma}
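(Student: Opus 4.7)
The plan is to exploit the known marginal distribution of each diagonal entry of $\hat{\Sigma}$ to first bound each $\abs{\Sigma_{ii} - \hat{\Sigma}_{ii}}$ individually, and then to use the fact that $\Sigma$ is positive semidefinite to bound the off-diagonal errors deterministically from the diagonal bounds. Under Assumption~\ref{ass:samples}, a standard fact for Gaussian samples is that $(N_s-1)\hat{\Sigma}_{ii}/\Sigma_{ii} \sim \chi^2_{N_s-1}$ for each $i$. A two-sided $\chi^2$ confidence interval with tail probability $\beta/(2n)$ on each side then gives $\chi^2_{N_s-1,\beta/(2n)} \leq (N_s-1)\hat{\Sigma}_{ii}/\Sigma_{ii} \leq \chi^2_{N_s-1,1-\beta/(2n)}$ with probability $1 - \beta/n$. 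Solving this double inequality for $\Sigma_{ii} - \hat{\Sigma}_{ii}$ and taking the larger of the two endpoint magnitudes yields precisely $\abs{\Sigma_{ii} - \hat{\Sigma}_{ii}} \leq r_{2,i}$ with $r_{2,i}$ as defined in~\eqref{eq:var_diag_bound}. A Bonferroni union bound over $i=1,\dots,n$ then guarantees that all $n$ diagonal bounds hold simultaneously with probability at least $1 - \beta$.

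Conditioned on this event, the off-diagonal errors are bounded without spending any further probability budget. Because $\Sigma$ is a valid covariance matrix, Cauchy--Schwarz gives $\abs{\Sigma_{ij}} \leq \sqrt{\Sigma_{ii}\Sigma_{jj}}$, so the triangle inequality yields $\abs{\Sigma_{ij} - \hat{\Sigma}_{ij}} \leq \sqrt{\Sigma_{ii}\Sigma_{jj}} + \abs{\hat{\Sigma}_{ij}}$. Substituting the diagonal upper bound $\Sigma_{ii} \leq \hat{\Sigma}_{ii} + r_{2,i}$ (and similarly for $j$) produces $\abs{\Sigma_{ij} - \hat{\Sigma}_{ij}} \leq \sqrt{(\hat{\Sigma}_{ii}+r_{2,i})(\hat{\Sigma}_{jj}+r_{2,j})} + \abs{\hat{\Sigma}_{ij}}$. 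Plugging the diagonal and off-diagonal bounds into $\norm{\Sigma-\hat{\Sigma}}_{\mathrm{F}}^2 = \sum_i (\Sigma_{ii}-\hat{\Sigma}_{ii})^2 + \sum_{i\neq j}(\Sigma_{ij}-\hat{\Sigma}_{ij})^2$ and taking a square root then reproduces the expression for $r_2$ in~\eqref{eq:var_bound}, which consequently holds with probability at least $1-\beta$.

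The main obstacle I anticipate is keeping the probability bookkeeping tight: the Bonferroni budget of $\beta/(2n)$ per tail across $n$ diagonal entries uses up exactly $\beta$ in total, and it is essential that the off-diagonal control be derived entirely from the diagonal events so that no additional probabilistic content is consumed. One might try to sharpen the off-diagonal term through a direct concentration inequality for each $\hat{\Sigma}_{ij}$, but its sampling distribution is considerably less clean than the $\chi^2$ form used here, and any such refinement would enlarge the union bound; the Cauchy--Schwarz route therefore appears both simpler and nearly as tight, making it the natural line of attack.
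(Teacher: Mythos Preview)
Your proposal is correct and follows essentially the same route as the paper: marginal $\chi^2$ intervals for the diagonal entries, Cauchy--Schwarz on $\Sigma$ to control the off-diagonals deterministically, and then assembly into the Frobenius norm. The only (minor) difference is in the joint step: you invoke a Bonferroni union bound to pass from the $n$ per-coordinate events of probability $1-\beta/n$ to the joint event of probability $\geq 1-\beta$, whereas the paper multiplies the marginal probabilities and uses $(1-\beta/n)^n>1-\beta$; your union-bound argument is arguably cleaner here since the diagonal entries of $\hat{\Sigma}$ are not independent, but both arrive at the same $1-\beta$ guarantee.
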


\begin{proof}
	It is known~\cite[p.~535]{Rao1965} that the diagonal elements of a sample Gaussian covariance matrix follow a scaled chi-squared distribution. Hence, we construct confidence intervals for each diagonal element separately using the sample variance concentration of a univariate Gaussian~\cite[p.~133]{Krishnamoorthy2006}. Thus, we obtain that each of the bounds:
	\begin{equation} \label{eq:var_diag_bound_r1}
		\abs*{\Sigma_{ii}-\hat{\Sigma}_{ii}} \leq r_{2,i} , \quad \forall i=1,\dots,n \,,
	\end{equation}
	holds with probability $1-\flatfrac{\beta}{n}$ and consequently all hold jointly with probability $(1-\flatfrac{\beta}{n})^n > 1-\beta$.\footnote{By considering the function $f(\beta)=(1-\flatfrac{\beta}{n})^n+\beta-1$ and verifying that $f(0)=0$ and $f'(\beta)>0$ for $n \geq 1$ and $0<\beta<1$.} Since the covariance matrix $\Sigma$ is positive definite its diagonal entries are non-negative and, moreover, it is known~\cite[p.~398]{Horn1985} that:
	\begin{equation} \label{eq:var_offdiag_bound}
		\abs*{\Sigma_{ij}} \leq \sqrt{\Sigma_{ii} \Sigma_{jj}} \leq \sqrt{(\hat{\Sigma}_{ii}+r_{2,i})(\hat{\Sigma}_{jj}+r_{2,j})} \,,
	\end{equation}
	for all $i,j=1,\dots,n$. According to the definition of the Frobenius norm we write:
	\begin{IEEEeqnarray*}{ L L }
		& \hspace{-1.5pt} \norm{\Sigma-\hat{\Sigma}}^2_{\mathrm{F}} \stackrel{\hphantom{\eqref{eq:var_offdiag_bound}}}{=} \sum_{i=1}^{n} \sum_{j=1}^{n} \abs*{\Sigma_{ij}-\hat{\Sigma}_{ij}}^2 \\
		& \hspace{-1.5pt} \stackrel{\eqref{eq:var_diag_bound_r1}}{\leq} \sum_{i=1}^{n} r^2_{2,i} + \sum_{i=1}^{n} \sum_{\substack{j=1 \\ j\neq i}}^{n} \abs*{\Sigma_{ij}-\hat{\Sigma}_{ij}}^2 \\
		& \hspace{-1.5pt} \stackrel{\hphantom{\eqref{eq:var_offdiag_bound}}}{\leq} \sum_{i=1}^{n} r^2_{2,i} + \sum_{i=1}^{n} \sum_{\substack{j=1 \\ j\neq i}}^{n} \big( \abs*{\Sigma_{ij}}+\abs*{\hat{\Sigma}_{ij}} \big)^2 \\
		& \hspace{-1.5pt} \stackrel{\eqref{eq:var_offdiag_bound}}{\leq} \sum_{i=1}^{n} r^2_{2,i} + \sum_{i=1}^{n} \sum_{\substack{j=1 \\ j\neq i}}^{n} \big( \sqrt{(\hat{\Sigma}_{ii}+r_{2,i})(\hat{\Sigma}_{jj}+r_{2,j})} + \abs*{\hat{\Sigma}_{ij}} \big)^2
	\end{IEEEeqnarray*}
	with probability $1-\beta$, from which the statement of the lemma readily follows.
\end{proof}

\begin{corollary} \label{cor:var_bound_diag}
If $\Sigma$ and $\hat{\Sigma}$ are diagonal matrices, the bound of \eqref{eq:var_bound} simplifies to:
	\begin{equation} \label{eq:var_bound_diag}
		\norm{\Sigma-\hat{\Sigma}}_{\mathrm{F}} \leq r_2 \coloneqq \sum_{i=1}^{n} r^2_{2,i} \,.
	\end{equation}
\end{corollary}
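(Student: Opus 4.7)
The plan is to specialize the Frobenius norm calculation from the proof of Lemma~\ref{lem:var_bound} to the case where both $\Sigma$ and $\hat{\Sigma}$ are diagonal. In that regime, every off-diagonal entry of both matrices is identically zero, so the contribution of the off-diagonal double sum in the chain of inequalities in the proof of Lemma~\ref{lem:var_bound} vanishes entirely, and only the diagonal bound \eqref{eq:var_diag_bound_r1} is needed.

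Concretely, I would first write $\norm{\Sigma-\hat{\Sigma}}_\mathrm{F}^2=\sum_{i,j}|\Sigma_{ij}-\hat{\Sigma}_{ij}|^2$. Under the diagonality assumption, $\Sigma_{ij}=\hat{\Sigma}_{ij}=0$ for all $i\neq j$, so the double sum collapses to $\sum_{i=1}^n|\Sigma_{ii}-\hat{\Sigma}_{ii}|^2$. I would then invoke the $n$ univariate bounds \eqref{eq:var_diag_bound_r1}, which by the same independence/union-bound argument used in the proof of Lemma~\ref{lem:var_bound} hold jointly with probability at least $(1-\beta/n)^n > 1-\beta$. This yields $\sum_{i=1}^n|\Sigma_{ii}-\hat{\Sigma}_{ii}|^2\leq\sum_{i=1}^n r_{2,i}^2$ with probability $1-\beta$, and taking square roots gives the stated bound (modulo the missing square root in the displayed formula, which the diagonal specialization makes transparent).

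There is essentially no obstacle here: the result is a direct corollary in the strict sense of the word, since the off-diagonal term $(\sqrt{(\hat{\Sigma}_{ii}+r_{2,i})(\hat{\Sigma}_{jj}+r_{2,j})}+|\hat{\Sigma}_{ij}|)^2$ in \eqref{eq:var_bound} was only introduced in the proof of Lemma~\ref{lem:var_bound} to handle the a priori unknown off-diagonal entries of $\Sigma$, and this step becomes trivially tight (in fact unnecessary) once those entries are known to be zero. The only subtlety worth flagging in the writeup is to re-use the same joint-probability argument and to note that no additional concentration inequality is needed beyond the diagonal chi-squared bounds already established.
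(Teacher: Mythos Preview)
Your proposal is correct and is precisely the specialization the paper has in mind: the corollary carries no separate proof in the paper, and your argument---dropping the off-diagonal terms in the Frobenius expansion and reusing the joint chi-squared bound \eqref{eq:var_diag_bound_r1} with the same $(1-\beta/n)^n>1-\beta$ step---is exactly how it follows from the proof of Lemma~\ref{lem:var_bound}. Your remark about the missing square root in the displayed formula is also well taken; the derivation indeed yields $\norm{\Sigma-\hat{\Sigma}}_{\mathrm{F}}^2\leq\sum_{i=1}^n r_{2,i}^2$, so the stated bound should carry a square root (or the left-hand side should be squared).
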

\vspace{5pt}Although the bound $r_2$ derived in Lemma~\ref{lem:var_bound} holds for a general covariance matrix $\Sigma$, Corollary~\ref{cor:var_bound_diag} provides a tighter bound in the case in which both $\Sigma$ and $\hat{\Sigma}$ are diagonal.

Concentration inequalities of sample covariance matrices have been derived before both for general~\cite{Vershynin2012} and Gaussian~\cite{Koltchinskii2017} distributions. The former result is too conservative, and both results depend on undefined constants. Hence, they are not useful in our optimization framework to follow.

\subsection{Robustifying the chance-constraints}
Using our tight bound on the distance between the true moments and the estimated ones, we use the same derivations as in~\cite[Theorem~4.1]{Calafiore2006}, albeit with different bounds, to derive a conservative reformulation of a single chance constraint as follows.

\begin{lemma} \label{lem:single_chanc_const}
	The chance constraint:
	\begin{equation*}
		\Pr(d^\Trans \tilde{x} + Mz > 0) \geq 1-\epsilon \,,
	\end{equation*}
		holds with a probability of at least $1-2\beta$, provided that:
	\begin{equation*}
		\Psi^{-1}(1-\epsilon) \norm{\left(\hat{\Sigma}+r_2 I_{n+1}\right)^{\flatfrac{1}{2}} \tilde{x}}_2 +r_1 \norm{\tilde{x}}_2 \leq {\hat{\mu}}^{\Trans} \tilde{x} + Mz \,.
	\end{equation*}
\end{lemma}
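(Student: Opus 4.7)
The plan is to start from the known moment-based reformulation (Theorem 2.1 of Calafiore--El Ghaoui, also used in \eqref{eq:CCFHOC_SingleConstr_norm}) which asserts that, under Assumption~\ref{ass:gaussian_distribution}, the chance constraint $\Pr(d^\Trans \tilde x + Mz > 0) \geq 1-\epsilon$ is implied by the deterministic inequality
\begin{equation*}
\Psi^{-1}(1-\epsilon)\,\|\Sigma^{\flatfrac{1}{2}} \tilde x\|_2 \leq \mu^\Trans \tilde x + Mz .
\end{equation*}
Hence it suffices to show that, on an event of probability at least $1-2\beta$ over the samples $d_1,\dots,d_{N_s}$, the proposed data-driven inequality in the lemma implies the above inequality involving the true moments.

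First I would handle the mean term. By Cauchy--Schwarz, $|(\mu-\hat\mu)^\Trans \tilde x| \leq \|\mu-\hat\mu\|_2 \|\tilde x\|_2$, so Lemma~\ref{lem:mean_bound} gives, with probability $1-\beta$,
\begin{equation*}
\mu^\Trans \tilde x \geq \hat\mu^\Trans \tilde x - r_1\|\tilde x\|_2 .
\end{equation*}
Next I would handle the covariance term. Since $\Sigma-\hat\Sigma$ is symmetric, its operator norm is bounded by its Frobenius norm, so Lemma~\ref{lem:var_bound} yields, with probability $1-\beta$, the semidefinite inequality $\Sigma - \hat\Sigma \preceq \|\Sigma-\hat\Sigma\|_{\mathrm F} I_{n+1} \preceq r_2 I_{n+1}$, and therefore $\Sigma \preceq \hat\Sigma + r_2 I_{n+1}$. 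Taking square roots in the induced quadratic form gives
\begin{equation*}
\|\Sigma^{\flatfrac{1}{2}} \tilde x\|_2 = \sqrt{\tilde x^\Trans \Sigma \tilde x} \leq \sqrt{\tilde x^\Trans (\hat\Sigma + r_2 I_{n+1}) \tilde x} = \|(\hat\Sigma + r_2 I_{n+1})^{\flatfrac{1}{2}} \tilde x\|_2 .
\end{equation*}

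To conclude I would intersect the two events on which Lemmas~\ref{lem:mean_bound} and \ref{lem:var_bound} hold; by a union bound this intersection has probability at least $1-2\beta$. On this event, chaining the two inequalities above with the hypothesis of the lemma gives
\begin{equation*}
\Psi^{-1}(1-\epsilon)\|\Sigma^{\flatfrac{1}{2}} \tilde x\|_2 \leq \Psi^{-1}(1-\epsilon)\|(\hat\Sigma + r_2 I_{n+1})^{\flatfrac{1}{2}} \tilde x\|_2 \leq \hat\mu^\Trans \tilde x + Mz - r_1\|\tilde x\|_2 \leq \mu^\Trans \tilde x + Mz ,
\end{equation*}
which, via the Calafiore--El Ghaoui reformulation, yields the original chance constraint. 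The only non-routine step is the passage from the Frobenius-norm bound on $\Sigma - \hat\Sigma$ to a semidefinite inequality suitable for bounding $\tilde x^\Trans \Sigma \tilde x$; this is where the (mild) conservatism of the reformulation is introduced, since we replace a spectral bound with a Frobenius bound and then isotropize it as $r_2 I_{n+1}$.
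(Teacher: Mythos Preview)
Your proof is correct and follows essentially the same route as the paper's: both start from the deterministic moment reformulation \eqref{eq:CCFHOC_SingleConstr_norm}, bound the mean perturbation via Cauchy--Schwarz with Lemma~\ref{lem:mean_bound}, bound the covariance perturbation via the Frobenius bound of Lemma~\ref{lem:var_bound} to obtain $\tilde x^\Trans \Sigma \tilde x \leq \tilde x^\Trans(\hat\Sigma + r_2 I_{n+1})\tilde x$, and combine the two events. The only cosmetic differences are that the paper reaches the covariance inequality through the Frobenius inner-product Cauchy--Schwarz $\tilde x^\Trans(\Sigma-\hat\Sigma)\tilde x \leq \|\Sigma-\hat\Sigma\|_{\mathrm F}\|\tilde x\tilde x^\Trans\|_{\mathrm F}$ rather than the operator-norm/semidefinite argument you use, and it combines the two events via $(1-\beta)^2 > 1-2\beta$ rather than a direct union bound; both yield exactly the same final inequality and confidence level.
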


\begin{proof}
	We have already established that the chance constraint $\Pr(d^\Trans \tilde{x} + Mz > 0) \geq 1-\epsilon$ is equivalent to $ \Psi^{-1}(1-\epsilon) \norm*{\Sigma^{\flatfrac{1}{2}} \tilde{x}}_2 - {\mu}^\Trans \tilde{x} - Mz \leq 0 $. According to Lemmas~\ref{lem:mean_bound} and \ref{lem:var_bound} the bounds $\norm*{\mu-\hat{\mu}}_2 \leq r_1$ and $\norm*{\Sigma-\hat{\Sigma}}_{\mathrm{F}} \leq r_2$ each hold with probability $1-\beta$ and thus hold jointly with probability $(1-\beta)^2$. It follows that:
	\begin{IEEEeqnarray*}{ l l }
		& \Psi^{-1}(1-\epsilon) \norm*{\Sigma^{\flatfrac{1}{2}} \tilde{x}}_2 - {\mu}^\Trans \tilde{x}_t - Mz \\
		{=}\: & \Psi^{-1}(1-\epsilon) \sqrt{{\tilde{x}}^\Trans (\Sigma-\hat{\Sigma}) \tilde{x} + {\tilde{x}}^\Trans \hat{\Sigma} \tilde{x} } \\
		& - (\mu-\hat{\mu})^\Trans \tilde{x} - {\hat{\mu}}^\Trans \tilde{x} - Mz \\
		\leq & \Psi^{-1}(1-\epsilon) \sqrt{ \norm*{\Sigma-\hat{\Sigma}}_{\mathrm{F}} \norm*{\tilde{x}{\tilde{x}}^\Trans}_{\mathrm{F}} + {\tilde{x}}^\Trans \hat{\Sigma} \tilde{x} } \\
		& + \norm*{\mu-\hat{\mu}}_2 \norm*{\tilde{x}}_2 - {\hat{\mu}}^\Trans \tilde{x} - Mz \\
		\leq & \Psi^{-1}(1-\epsilon) \sqrt{ r_2 \norm*{\tilde{x}{\tilde{x}}^\Trans}_{\mathrm{F}} + {\tilde{x}}^\Trans \hat{\Sigma} \tilde{x} } \\
		& + r_1 \norm*{\tilde{x}}_2 - {\hat{\mu}}^\Trans \tilde{x} - Mz \\
		= & \Psi^{-1}(1-\epsilon) \sqrt{ {\tilde{x}}^\Trans (\hat{\Sigma}+r_2 I_{n+1}) \tilde{x} } + r_1 \norm*{\tilde{x}}_2 - {\hat{\mu}}^\Trans \tilde{x} - Mz
	\end{IEEEeqnarray*}
	with probability $(1-\beta)^2 > 1-2\beta$, from which the statement of the lemma readily follows.
\end{proof}

Using the lemma above, we formulate Problem \eqref{eq:CCFHOC_2} for the case in which the moments of the uncertainty are estimated from sample data.
\vspace{-0.3\baselineskip}
\begin{mini!}<b>
	{\vb{u}}{J(x_0,\vb{u}) \label{eq:CCFHOC_3_cost}}
	{\label{eq:CCFHOC_3}}{}
	\addConstraint{\vb{x},\vb{u} \text{ satisfy \eqref{eq:x_Dynamics} with initial state } x_0}{\label{eq:CCFHOC_3_dynamics}}
	\addConstraint{\vb{u} \in \bm{\mathcal{U}}} {\label{eq:CCFHOC_3_inputs}}
	\addConstraint{\Psi^{-1}(1-\epsilon^t_{ij}) \norm*{({\hat{\Sigma}}^t_{ij}+r^t_{2,ij}I_{n+1})^{\flatfrac{1}{2}} \tilde{x}_t}_2}{\notag}
	\addConstraint{\hspace{0.2\linewidth} + r^t_{1,ij} \norm*{\tilde{x}_t}_2 \leq {\hat{\mu}_{ij}}^{t\Trans} \tilde{x}_t + Mz^t_{ij}}{\label{eq:CCFHOC_3_constr}}
	\addConstraint{\sum_{i=1}^{F_j} z^t_{ij} < F_j ,\, z^t_{ij} \in \{0,1\}}{\label{eq:CCFHOC_3_binvar_constr}}
\end{mini!}
\vspace{-0.5\baselineskip}\par\noindent

\begin{theorem} \label{thm:CCFHOC_samples}
	A solution to Problem \eqref{eq:CCFHOC_3} is a feasible solution to Problem \eqref{eq:CCFHOC_2} with a probability of at least $ 1 - 2 \beta N \sum_{j=1}^{N_o}F_j $.
\end{theorem}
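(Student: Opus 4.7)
The plan is to reduce the statement to a union-bound argument on top of Lemma~\ref{lem:single_chanc_const}. Problems \eqref{eq:CCFHOC_2} and \eqref{eq:CCFHOC_3} share the same objective, dynamics constraint \eqref{eq:CCFHOC_2_dynamics}, input constraint \eqref{eq:CCFHOC_2_inputs} and integer constraint \eqref{eq:CCFHOC_2_binvar_constr}; they differ only in the per-obstacle-face second-order cone constraint. Thus a feasible point $(\vb{u}, z^t_{ij})$ of \eqref{eq:CCFHOC_3} automatically satisfies \eqref{eq:CCFHOC_2_dynamics}--\eqref{eq:CCFHOC_2_inputs} and \eqref{eq:CCFHOC_2_binvar_constr}, and what remains is to show that \eqref{eq:CCFHOC_3_constr} implies \eqref{eq:CCFHOC_2_constr} with the announced probability.

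I would first fix one triple $(t,i,j)$. Recall from Section~\ref{sec:Reformulation} that \eqref{eq:CCFHOC_2_constr} is exactly the deterministic reformulation of the single chance constraint $\Pr({a^t_{ij}}^\Trans x_t + b^t_{ij} + Mz^t_{ij} > 0) \geq 1-\epsilon^t_{ij}$, written as $\Pr({d^t_{ij}}^\Trans \tilde{x}_t + Mz^t_{ij} > 0) \geq 1-\epsilon^t_{ij}$ under the concatenation $d^t_{ij} = [{a^t_{ij}}^\Trans \,\, b^t_{ij}]^\Trans$. Applying Lemma~\ref{lem:single_chanc_const} to this single constraint with estimators $(\hat{\mu}^t_{ij}, \hat{\Sigma}^t_{ij})$ and radii $(r^t_{1,ij}, r^t_{2,ij})$ shows that \eqref{eq:CCFHOC_3_constr} for the triple $(t,i,j)$ implies \eqref{eq:CCFHOC_2_constr} for the same triple with probability at least $1-2\beta$, where the probability is taken over the i.i.d.\ samples of $d^t_{ij}$ used to construct the estimators.

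Next, I would apply Boole's inequality across all $N \sum_{j=1}^{N_o} F_j$ triples. Let $E^t_{ij}$ be the event that the implication \eqref{eq:CCFHOC_3_constr}$\Rightarrow$\eqref{eq:CCFHOC_2_constr} holds for the triple $(t,i,j)$. Then $\Pr((E^t_{ij})^{\mathrm{c}}) \leq 2\beta$, so
\begin{equation*}
\Pr\!\Big(\bigwedge_{t,i,j} E^t_{ij}\Big) \geq 1 - \sum_{t=1}^{N}\sum_{j=1}^{N_o}\sum_{i=1}^{F_j} \Pr((E^t_{ij})^{\mathrm{c}}) \geq 1 - 2\beta N \sum_{j=1}^{N_o} F_j.
\end{equation*}
On the intersection of all the $E^t_{ij}$, every constraint \eqref{eq:CCFHOC_3_constr} satisfied by the candidate solution yields the corresponding \eqref{eq:CCFHOC_2_constr}, which together with the matching dynamics, input and integer constraints gives feasibility in \eqref{eq:CCFHOC_2}.

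There is no serious obstacle in the argument; the only subtlety worth flagging is that the sample sets across different $(t,i,j)$ need not be independent for the union bound to apply, and that the probability is understood with respect to the joint sampling randomness, not with respect to $d^t_{ij}$ itself. I would also remark that the bound is meaningful only when $2\beta N \sum_{j=1}^{N_o} F_j < 1$, which constrains the per-face confidence level $\beta$ to scale inversely with the number of obstacle faces across the horizon.
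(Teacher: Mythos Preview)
Your proof is correct and follows the same overall strategy as the paper: isolate the only differing constraint, invoke Lemma~\ref{lem:single_chanc_const} per triple $(t,i,j)$, and then aggregate over all $k = N\sum_{j=1}^{N_o}F_j$ triples. The only difference is in the aggregation step: the paper multiplies the per-triple success probabilities to get $(1-2\beta)^k$ and then applies the elementary inequality $(1-2\beta)^k > 1-2\beta k$, whereas you go straight to $1-2\beta k$ via the union bound. Your route is marginally cleaner since, as you note, the union bound does not require independence of the sampling events across triples, while the product form implicitly does; both arrive at the identical final bound.
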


\begin{proof}
	By comparing problems \eqref{eq:CCFHOC_2} and \eqref{eq:CCFHOC_3}, it is evident that the only difference is the chance constraints \eqref{eq:CCFHOC_2_constr} and \eqref{eq:CCFHOC_3_constr}. Using Lemma~\ref{lem:single_chanc_const}, we know that each constraint of \eqref{eq:CCFHOC_3_constr} implies the corresponding chance constraint of \eqref{eq:CCFHOC_2_constr} with probability $1-2\beta$. By requiring that this implication holds jointly for all $k=N\sum_{j=1}^{N_o}F_j$ constraints, and noting that $(1-2\beta)^k > 1 - 2 \beta k$,\footnote{By considering the function $f(\beta)=(1-2\beta)^k+2k\beta -1$ and verifying that $f(0)=0$ and $f'(\beta)>0$ for $k \geq 1$ and $0<\beta<\flatfrac{1}{2}$.} the statement of the theorem readily follows.
\end{proof}
Note that, in contrast to the scenario approach~\cite{Sessa2018}, the number of samples does not affect the complexity of the resulting optimization problem, nor the probabilistic guarantees. However, a lower number of samples results in looser bounds on the estimated moments. This in turn affects the feasibility of the optimization problem. We will illustrate this aspect in the numerical simulation. 

We will illustrate the effectiveness of the concentration bounds through a simple example.
\begin{example} \label{ex:MRA}
	Consider the chance constrained problem 
	$ \min_{x \in \Reals} x$ s.t.\ $\Pr(x \geq \delta) \geq \num{0.95}$, where $\delta \sim \Gauss(0,1)$. We estimate the moments of $\delta$ using $N_s=\num{100}$ samples and solve the problem: (i) assuming that the moments estimated are correct (i.e.\ $r_1=r_2=0$) and (ii) robustifying against moment uncertainty (with $\beta=10^{-3}$). Doing the above $10^4$ times, we notice that the solution of (i) has approximately a \SI{51}{\percent} probability of violating the prescribed constraint upon drawing a new set of samples, whereas (ii) satisfies the chance constraint in all scenarios. Even if we were to increase $N_s$ significantly, e.g.\ to $N_s = 10^5$, the same behaviour is observed with (i) producing an \enquote{unsafe} solution around half of the times, whereas the solution of (ii) always satisfies the safety bound and, in fact, converges to the optimal value of $x^*=\Psi^{-1}(0.95)$ as $N_s \rightarrow \infty$. For comparison, the scenario approach would require \num{291} samples to satisfy the chance constraint with the same confidence and the distributionally robust approach of~\cite[Section 4]{Calafiore2006} could not even be applied due to the unbounded uncertainty.
\end{example}

\section{CASE STUDY} \label{sec:CaseStudy}
To demonstrate the applicability of our method, we examine an adaptation of~\cite[Case study 5.1]{Sessa2018}. In this case study\footnote{The code used for the case study is available on \href{https://github.com/Exomag/moments-robust-planning-ecc2019}{\tt moments-robust-planning-ecc2019}.}, the controlled car (henceforth called \enquote{ego car}) is driving with an initial forward velocity of $\SI{50}{\kilo\meter\per\hour}$ on the right lane of a two-way street when an adversary car, coming from the opposite direction, starts turning left into a side street.

The trajectory of position and orientation, i.e.\ the \emph{pose}, of the adversary car is uncertain and follows unicycle dynamics:
\begin{equation} \label{eq:CaseStudy_AdvDyn}
	\dot{\chi} \coloneqq
	\begin{bmatrix}
		\dot{y}_1 & 
		\dot{y}_2 & 
		\dot{\theta}
	\end{bmatrix}^\Trans
	= \begin{bmatrix}
		v \cos(\theta) &
		v \sin(\theta) &
		\omega
	\end{bmatrix}^\Trans \,,
\end{equation}
with a constant forward velocity of $v = \SI{22}{\kilo\meter\per\hour}$ and known initial state $(\SI{49}{\meter},\SI{1.75}{\meter},\ang{0})$. The angular velocity $\omega$ enters through a zero-order hold and at time $t$ is distributed as:
\begin{equation*}
	\omega_t \sim \Unif \left( \frac{\pi-0.66}{2(N+1)} , \min\left\{ \frac{\pi+0.66}{2(N+1)} , \frac{\pi}{2} - \sum_{\tau=0}^{t-1} \omega_\tau \right\} \right) \,,
\end{equation*}
measured in $\flatfrac{\si{\radian}}{T_s}$, which means that the adversary car completes a $\ang{90}$ counter-clockwise turn by the end of the planning horizon $N$.

The controlled car is modelled as a double-integrator:
\begin{equation} \label{eq:CaseStudy_EgoDyn}
	\begin{bmatrix}
		\dot{x}_1 &
		\dot{x}_2
	\end{bmatrix}^\Trans
	= \begin{bmatrix}
		x_3 &
		x_4
	\end{bmatrix}^\Trans \hspace{-2pt} , 
	\begin{bmatrix}
		\dot{x}_3 &
		\dot{x}_4
	\end{bmatrix}^\Trans
	= \begin{bmatrix}
		u_1 &
		u_2
	\end{bmatrix}^\Trans ,
\end{equation}
where the state $x \in \Reals^4$ contains the position $(x_1,x_2)$ of the car and the corresponding velocities $(x_3,x_4)$. The accelerations $(u_1,u_2)$ are the control inputs and are constrained to the set $ \mathcal{U} \coloneqq \{ u \in \Reals^2 : -3 \leq u_1 \leq 10 , \abs*{u_2} \leq 5 \} $. Similarly, the velocities $(x_3,x_4)$ are constrained to the set:
\begin{equation*}
	\mathcal{X} \coloneqq \left\{ x \in \Reals^4 : \abs{\frac{x_3}{ \SI{40}{\kilo\meter\per\hour}}-1} + \abs{\frac{x_4}{ \SI{20}{\kilo\meter\per\hour}}-1} \leq 1 \right\} \,.
\end{equation*}
The objective function is to maximize the terminal forward position of the car, encoded through the cost $J = - x_{1,N}$.

In order to deal with the uncertain pose $\chi_t$ of the adversary car, the joint chance constraint:
\begin{equation} \label{eq:CaseStudy_ChanceConstraint}
	\Pr(\bigwedge_{t=1}^{N} \bigvee_{i=1}^{4} {d^t_{i}(\chi_t)}^\Trans \begin{bmatrix} x_{1,t} \\ x_{2,t} \\ 1 \end{bmatrix} > 0) \geq 1 - \epsilon \,,
\end{equation}
is enforced. The exact form of the coefficients $d^t_{i}(\chi_t)$ is:
\begin{IEEEeqnarray*}{ L L } 
	d_1^t & = [ + \cos(\theta) , - \sin(\theta) , - \cos(\theta) y_1 + \sin(\theta) y_2 - \flatfrac{L}{2} ]^\Trans , \\
	d_2^t & = [ + \sin(\theta) , + \cos(\theta) , - \sin(\theta) y_1 - \cos(\theta) y_2 - \flatfrac{W}{2} ]^\Trans , \\
	d_3^t & = [ - \cos(\theta) , + \sin(\theta) , + \cos(\theta) y_1 - \sin(\theta) y_2 - \flatfrac{L}{2} ]^\Trans , \\
	d_4^t & = [ - \sin(\theta) , - \cos(\theta) , + \sin(\theta) y_1 + \cos(\theta) y_2 - \flatfrac{W}{2} ]^\Trans ,
\end{IEEEeqnarray*}
where $L$ and $W$ are the length and width, respectively, of the adversary car.

In the absence of prior knowledge of the adversary car's trajectory, we assume a Gaussian distribution of each coefficient $d^t_{i}(\chi_t)$ but with uncertain moments. If the actual distribution of the coefficients can be approximated sufficiently well by a Gaussian distribution, then the solution of our approach should still satisfy \eqref{eq:CaseStudy_ChanceConstraint}. The mean and covariance of each $d^t_{i}$ are calculated by sampling $N_s$ i.i.d.\ adversary car trajectories and the optimization problem is reformulated based on these estimates, as presented in Section~\ref{sec:MRA}. 

We consider three different cases for this case study. Case A estimates the moments using $N_s^A$ samples but assumes perfect knowledge of the moments (i.e.\ $r_1=r_2=0$), whereas cases B and C use $N_s^B$ and $N_s^C$ samples and robustify the constraints using the bounds $r_1$ and $r_2$. Discrete-time versions of \eqref{eq:CaseStudy_AdvDyn} and \eqref{eq:CaseStudy_EgoDyn} with sampling time $T_s = \SI{0.4}{\second}$ and a planning horizon of $N = 10$ (i.e.\ $\SI{4}{\second}$) are used. Both cars have length \SI{4.5}{\meter} and width \SI{2}{\meter}, which are taken into account for the relevant inequality constraints. We also enforce lane constraints. The joint chance constraint \eqref{eq:CaseStudy_ChanceConstraint} is imposed with $\epsilon = \num{0.05}$ and $\beta = 10^{-3}$ over the whole horizon. The number of samples used for the proposed approach is $N_s^A=N_s^B=\num{5000}$ for cases A/B and $N_s^C=\num{500}$ for case C. The resulting optimization problem is a MISOCP with $\num{60}$ continuous variables, $\num{40}$ binary variables and $\num{224}$ constraints. The MISOCP is solved for all three cases $\num{100}$ different times, with different realizations of the disturbance, taking on average $\SI{0.70}{\second}/\SI{1.07}{\second}/\SI{0.74}{\second}$ to solve for cases A/B/C respectively. All computations were carried out on an Intel i5 CPU at \SI{2.50}{\GHz} with \SI{8}{\giga\byte} of memory using YALMIP~\cite{Lofberg2004} and CPLEX~\cite{CPLEX}. For comparison, an open-loop scenario approach as in~\cite{Sessa2018} would require $\num{2963}$ samples leading to $\num{118704}$ constraints and \SI{6}{\second} solver time.

\vspace{-0.2\baselineskip}The solutions to one instance of the problem are presented in Fig.~\ref{fig:TurningTruck_Frames}. We can observe that under moment uncertainty (cases B and C) the ego car is forced to break sharply, in order to avoid the uncertain position of the adversary car. On the other hand, perfect knowledge of the moments (case A) leads to a feasible trajectory that avoids the adversary car with a narrow margin. As expected, in case B where a larger number of samples than case C is used and hence smaller moment uncertainty is present, a slightly better solution is produced (i.e.\ the ego car's final position is slightly to the right in case B compared to case C). As an example, for time step $t=5$ and constraint involving coefficient $d_4^5$, the values of the bounds are $r_1 = \num{0.0652}$ and $r_2 = \num{0.1559}$ for case B, and $r_1 = \num{0.2018}$ and $r_2 = \num{0.3436}$ for case C.

We evaluated the empirical violation probability of each case's solution through Monte Carlo simulations using $10^5$ new realizations of the adversary car's unknown pose. In Fig.~\ref{fig:TurningTruck_Trajectories} one of the $\num{100}$ solutions for the ego car is presented, along with the area covered by all the generated adversary car's trajectories. The distribution of the violation probability and the terminal position over the $\num{100}$ different instances solved is presented in Fig.~\ref{fig:TurningTruck_CostsViolsBoxplot}. As seen in Fig.~\ref{fig:TurningTruck_CostsBoxplot}, case A produced the smallest cost, with case B having the next smaller one and case C having the largest cost (i.e.\ smallest terminal position). As seen in Fig.~\ref{fig:TurningTruck_ViolsBoxplot} case A produced a violation probability of approximately $\SI{0.3}{\percent}$, well within the prescribed safety margin of $\SI{5}{\percent}$, whereas cases B and C produced $\SI{0}{\percent}$ violation probabilities. This is to be expected since, because of the increased uncertainty stemming from the moment estimates in cases B and C, the ego car chose the safest option which was to break sharply. It is noteworthy that the assumption of Gaussian uncertainty, although it was a simplifying assumption due to lack of further information, was sufficient to produce trajectories that still satisfied the risk level originally prescribed.

There are two main factors that contribute to the lower risk level exhibited by all three cases (fortunately), compared to the one prescribed. First, the trajectory planning framework present is an open-loop control scheme and is inherently more conservative than a closed-loop approach. Second, the use of Boole's inequality, although necessary for tractability, is a conservative reformulation.

\begin{figure}[ht!]
	\centering
	\setlength{\figureheight}{1.0\columnwidth}
	\setlength{\figurewidth}{0.79\columnwidth}
	\includegraphics{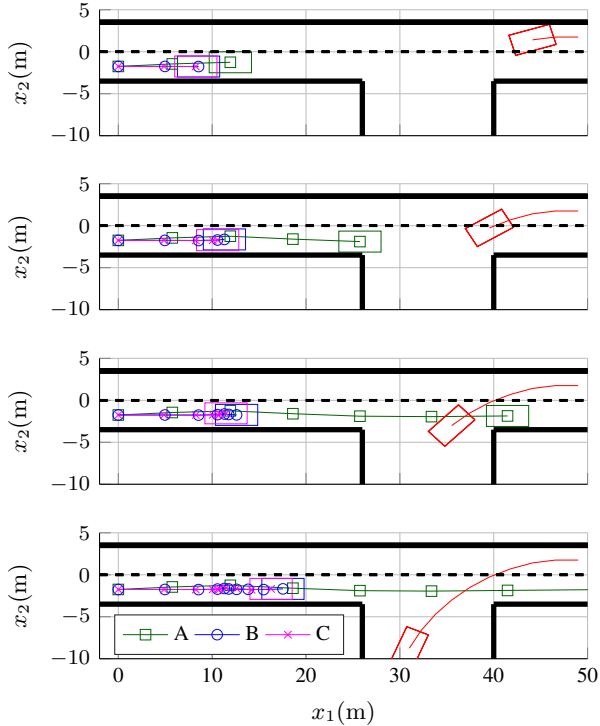}
	\caption{One simulation of the ego car's trajectories for cases A (green/squares), B (blue/circles), C (magenta/crosses) and the adversary car's trajectory (red). Pictured are time steps $t=3,5,7$ and $10$.}
	\label{fig:TurningTruck_Frames}
\end{figure}

\begin{figure}[ht!]
	\centering
	\setlength{\figureheight}{0.25\columnwidth}
	\setlength{\figurewidth}{0.79\columnwidth}
	\includegraphics{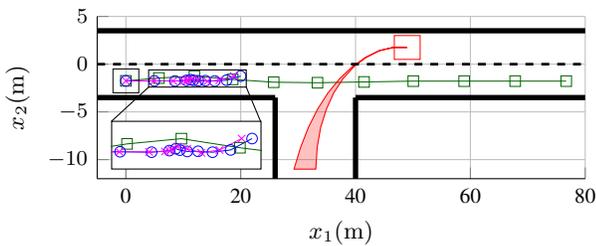}
	\caption{Evaluation of the trajectories through Monte Carlo simulations using $10^5$ realizations of adversary car trajectories. The area shaded in red contains all the new trajectories that were generated. The trajectory in green/blue/magenta is one example solution for case A/B/C, out of the $\num{100}$.}
	\label{fig:TurningTruck_Trajectories}
\end{figure}

\begin{figure}[ht!]
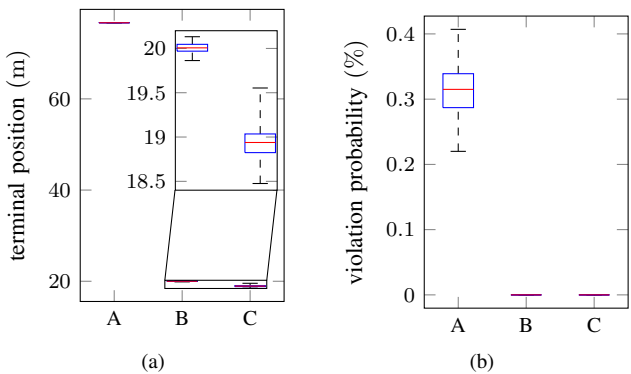

	\centering
	\setlength{\figureheight}{0.45\columnwidth}
	\setlength{\figurewidth}{0.33\columnwidth}
	\subfloat[]{
		\includegraphics{TurningTruck_CostsBoxplot.tikz}
		\label{fig:TurningTruck_CostsBoxplot}
	} \hfill
	\subfloat[]{
		\includegraphics{TurningTruck_ViolsBoxplot.tikz}
		\label{fig:TurningTruck_ViolsBoxplot}
	}
	\caption{Distribution of the (a) terminal position and (b) empirical violation probability over $\num{100}$ instances of the trajectory planning optimization problem. The median is in red and the 25/75-th percentiles are in blue.}
	\label{fig:TurningTruck_CostsViolsBoxplot}
\end{figure}

\section{CONCLUSIONS} \label{sec:Conclusions}
We used chance-constrained optimization to tackle the problem of trajectory planning in an environment comprised of uncertain obstacles. To account for partial knowledge of the uncertainty's distribution in a computationally tractable way, we reformulated the resulting problem based on the estimated moments of the distribution. We derived tight concentration inequalities for the estimated moments, in order to improve the feasibility of the resulting optimization problem. We illustrated this framework in an autonomous driving example, where the approach produced conservative, but safe, trajectories. Tuning the uncertainty's distribution model to specific problems and exploring ways to reduce the conservativeness introduced by Boole's inequality, through for example a non-uniform risk allocation, would be relevant.



\end{document}